\documentclass[conference]{IEEEtran}
\ifCLASSINFOpdf
\else
\fi
\usepackage{setspace}
\usepackage{verbatim}
\usepackage{cite}
\usepackage{amsmath,amssymb,amsfonts,amsthm}
\usepackage{algorithm}
\usepackage{algpseudocode}
\usepackage{graphicx}
\usepackage{epstopdf}
\usepackage{comment}
\usepackage{textcomp}
\usepackage{tikz}
\usepackage{color}
\usepackage{stfloats}

\newtheorem{lemma}{{Lemma}}

\def\BibTeX{{\rm B\kern-.05em{\sc i\kern-.025em b}\kern-.08em
    T\kern-.1667em\lower.7ex\hbox{E}\kern-.125emX}}
\usepackage{bm}      
\usepackage{multirow}
\usepackage{algpseudocode}
\usepackage{amsmath,amssymb}
\usepackage{tikz}
\usetikzlibrary{calc,arrows.meta}
\usetikzlibrary{positioning,calc}
\usepackage{epstopdf}
\usepackage[top=0.75in, bottom=1.09in, left=0.7in, right=0.7in]{geometry}
\allowdisplaybreaks[4]

\begin{document}

\title{TOIB: Task-Oriented Orthogonalised Information Bottleneck for Distributed Semantic Communication}
\author{
\IEEEauthorblockN{Jiaxiang Wang\IEEEauthorrefmark{1},
Zhaohui Yang\IEEEauthorrefmark{2}\IEEEauthorrefmark{3},
Yahao Ding,\IEEEauthorrefmark{1}, Ye Hu\IEEEauthorrefmark{4}, 
and Mohammad Shikh-Bahaei\IEEEauthorrefmark{1}}
	\IEEEauthorblockA{
 		$\IEEEauthorrefmark{1}$Department of Engineering 
 King's College London 
London, UK \\
			$\IEEEauthorrefmark{2}$College of Information Science and Electronic Engineering, Zhejiang University, Hangzhou, China\\
   	$\IEEEauthorrefmark{3}$Zhejiang Provincial Key Laboratory of Info. Proc., Commun. \& Netw. (IPCAN), Hangzhou, China\\
      $\IEEEauthorrefmark{4}$ Department of Industrial and Systems Engineering, University of Miami, Miami, FL, USA\\
            E-mails: 
\{jiaxiang.wang, k19005540, m.sbahaei\}@kcl.ac.uk,
yang\_zhaohui@zju.edu.cn,
yehu@miami.edu
		}}

\maketitle

\begin{abstract}
Task-oriented semantic communication emerges as a crucial paradigm for next-generation wireless networks, aiming to efficiently transmit task-relevant information while reducing interference and redundancy across multiple users. Existing information bottleneck (IB)-based frameworks predominantly focus on single-user scenarios, neglecting cross-user semantic interference in distributed semantic communications. To overcome this limitation, we propose a task-oriented orthogonalised information bottleneck (TOIB) approach, explicitly designed for distributed semantic communication systems. By introducing task-conditioned latent variables, TOIB adaptively balances semantic sufficiency, semantic compression, and inter-user semantic orthogonality. Extensive simulations conducted on classification tasks demonstrate that TOIB consistently achieves superior classification accuracy across various signal-to-noise ratio (SNR) regimes compared to traditional IB and deep joint source-channel coding (JSCC) methods. Specifically, the proposed method significantly enhances robustness under harsh low-SNR conditions and effectively suppresses cross-user semantic interference, as validated by cross-decoding accuracy metrics.
\end{abstract}

\section{Introduction}
The proliferation of intelligent applications and multimedia services in future wireless networks necessitates a paradigm shift from traditional data-oriented communication towards semantic and task-oriented communication \cite{saad2019vision,zhao2025compression,yang2025integrated,ding2023distributed}. Unlike conventional communication systems focused solely on transmitting raw data bits with high reliability, task-oriented communication aims to deliver semantic features that are explicitly relevant to specific user tasks, such as image reconstruction, video analytics, and semantic classification. 

Existing research on deep learning (DL)-based semantic communication system has already demonstrated a favourable performance in single-user or point-to-point tasks, spanning a range of applications such as text-based \cite{xie2021deep}, image-based \cite{wang2025semantic}, and video-based \cite{wang2025generative} semantic communication system. However, the joint optimisation between the communication overhead and task execution performance was not considered by above methods. To address this issue, the authors in \cite{shao2021learning} have leveraged the information bottleneck (IB) framework to explicitly formulate the joint optimisation of communication overhead and task execution performance as a rate-distortion optimisation problem. The IB framework was later extended to multi-device cooperative edge inference in \cite{shao2022task}
However, traditional IB-based semantic communication framework only balances semantic compression and information sufficiency in single-user settings, but lack the capability to handle cross-user semantic interference in distributed semantic communication. 
However, directly adopting traditional IB-based methods which balance semantic compression and information sufficiency for single-user settings, the issue of semantic interference is not be well tackled when extended to distributed semantic communication.

Some recent works \cite{ma2025semantic,ma2023features} have focused on the IB based multi-user broadcast communication. In \cite{ma2025semantic}, the authors proposed a semantic feature division multiple access scheme for digital semantic broadcast scheme. However, this approach does not explicitly impose orthogonality constraints across different users’ semantic features. As a result, the burden of learning orthogonality is shifted entirely onto the network during training, thereby increasing the overall training complexity. In \cite{ma2023features}, the authors proposed a feature-disentangled semantic broadcast communication systems. However, this method does not account for the task correlation and heterogeneity.
In broadcast scenarios, if one directly minimises the compression term of each user and maximises its task-related mutual information in parallel without explicitly constraining the cross-user representational correlation, substantial semantic interference will arise in the latent space. Particularly, (i) for mutually independent tasks, the semantic features of different users become entangled in the latent space, causing irrelevant or even conflicting semantic components to be inadvertently carried over and propagated to other users’ receivers; (ii) for correlated tasks, beneficial task structures exist across users that can be jointly exploited, and excessive separation of these features would hinder such cross-task knowledge transfer.


Motivated by these issues, this paper proposes a task-oriented semantic orthogonalised information bottleneck (TOIB) framework to jointly balance semantic sufficiency, compression, and inter-user orthogonality. The main contributions of this work are summarised as follows:
\begin{itemize}
    \item We establish an end-to-end single-cell distributed semantic communication system, where the BS employs user-specific semantic encoders to extract latent semantic representations, performs power-controlled superposition, and transmits over a wireless channel; each user decodes its task target from the received signal via a dedicated semantic decoder.
    \item Under this broadcast setting, we formulate a task-oriented orthogonalised IB optimisation to simultaneously maximise semantic sufficiency and enforce semantic compression, and we identify that naive parallel optimisation across users leads to strong latent-space coupling, inducing redundancy and task degradation due to semantic interference.
    \item To explicitly regulate cross-user dependence, we introduce a pairwise task-conditioning variable and propose the TOIB objective by augmenting IB with a conditional mutual-information regulariser. We further derive a tractable variational surrogate and employ CLUB-based estimators together with Monte Carlo sampling, resulting in a two-stage training procedure that alternates between tightening inter-user MI upper bounds and updating the main networks
    \item Extensive CIFAR-10 classification experiments demonstrate that TOIB yields consistently improved accuracy over the considered baselines.
    Deep VIB and Deep JSCC across SNR regimes. In particular, TOIB achieves clear gains on both AWGN and Rayleigh channels, including $2\%–5\%$ improvement over Deep VIB under Rayleigh fading, while Deep JSCC saturates at about $84\%$.
\end{itemize}

\vspace{-2mm}
\section{System Model and Problem Formulation}

\subsection{System Overview}

\begin{figure}[t]
 \centerline{\includegraphics[width=0.48\textwidth]{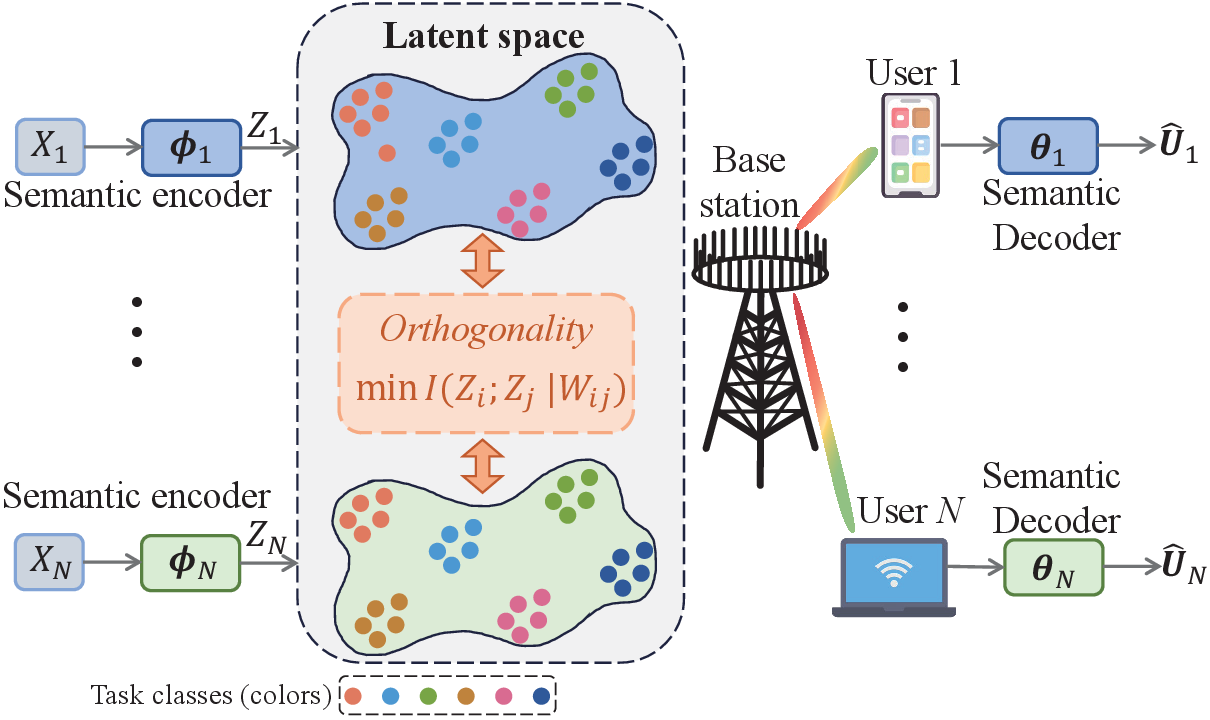}}
    \caption{The considered distributed semantic communication system.}
    \label{fig: system model}
\end{figure}

As shown in Fig. \ref{fig: system model}, we focuses on a single-cell task-oriented distributed semantic communication system. In this system, the base station (BS) concurrently transmits semantic information to the user set $\mathcal{N}=\{1,\ldots,N\}$, and all users share the identical time and frequency resources.
Specifically, the source information of each user $i\in\mathcal{N}$ is denoted as a random variable $X_i$, and the corresponding semantic task target is denoted as a random variable $U_i$. The BS is equipped with a user-specific semantic encoder parametrised by $\boldsymbol{\phi}_i$ to extract the semantic feature representation for each user image  $\boldsymbol{x}_i$, which is denoted as:
\begin{align}
\label{eq: encoder}
    \boldsymbol{z}_i \sim p_{\boldsymbol{\phi}_i}(\boldsymbol{z}_i \mid \boldsymbol{x}_i),
\end{align}
where $p_{\boldsymbol{\phi}_i}(\boldsymbol{z}_i \mid \boldsymbol{x}_i)=\mathcal{N}(\boldsymbol{\mu}_{\phi_i}(\boldsymbol{x}_i), \mathrm{diag}(\boldsymbol{\sigma}^2_{\phi_i}(\boldsymbol{x}_i)))$, and $\mathbb{E}[\|\boldsymbol{z}_i\|^2] = 1$.
Then, the BS transmitter performs power control and superimposes the semantic representations of all users to form a transmission signal, which is denoted as:
\begin{align}
\label{eq: superimposed signal}
    \boldsymbol{s} = \sum^{N}_{i=1} \sqrt{p_i}{\boldsymbol{z}}_i,
\end{align}
where $p_i$ is the power allocated to user $i$'s semantic information, with $\sum_{i=1}^N p_i = P^{\max}$, and $ P^{\max}$ being the maximum power of BS.
The received signal of the downlink user $i\in \mathcal{N}$ is denoted as:
\begin{align}
\label{eq: received signal}
    \boldsymbol{y}_i = h_i \boldsymbol{s} + \boldsymbol{n}_i,
\end{align}
where $h_i$ is the channel gain from the BS to user $i$, and $n_i \sim \mathcal{N}(0, \sigma^2_i\mathbf{I})$ is the Additive White Gaussian Noise (AWGN) of user $i$.
After receiving the signal, the receiver at user $i$ decodes the corresponding user task target from $\boldsymbol{y}_i$ using its dedicated semantic decoder $\boldsymbol{D}_{\boldsymbol{\theta}_i}(\cdot)$ by:
\begin{align}
\label{eq: decoder}
    \hat{\boldsymbol{u}}_i = \boldsymbol{D}_{\boldsymbol{\theta}_i}({\boldsymbol{y}}_i),
\end{align}
where $\boldsymbol{D}_{\boldsymbol{\theta}_i}(\cdot)$ denotes the semantic decoder of user $i$ with learnable parameters $\boldsymbol{\theta}_i$.

According to the end-to-end broadcast model, the transmit signal is a deterministic function of all users' latent representations, and each receiver observes the signal \eqref{eq: received signal}. We optimise per-user task performance under a semantic compression constraint, while explicitly controlling cross-user dependence in the latent space through a conditional mutual information regulariser.
\vspace{-4mm}
\subsection{Problem Formulation}
Our goal is to find the optimal semantic encoder and decoder design that can serve all the users with balanced semantic sufficiency and compression in the considered task-oriented communication systems. Thus, we formulate an optimization problem whose objective is to maximize the semantic sufficiency and compression at all users:
\begin{align}
    \underset{\{\boldsymbol{\phi}_i,\boldsymbol{\theta}_i\}^N_{i=1}}{\min}  \sum^{N}_{i=1}[\underbrace{-I(U_i; Y_i)}_{\text{\textbf{sufficiency}}} + \underbrace{\beta I(X_i; Z_i)}_{\text{\textbf{compression}}} ]. \label{eq: conventional IB}
\end{align}
However, under the conventional IB framework, if one merely minimises the single-user semantic compression term and maximises the semantic sufficiency term without imposing constraints on cross-user representation correlation, the semantic representations from different users in a downlink distributed communication scenario often exhibit strong coupling in the latent space. Such semantic interference leads to two consequences: on the one hand, it introduces information redundancy across tasks, causing the limited downlink bandwidth to be occupied by duplicated features; on the other hand, when the tasks are independent, it delivers irrelevant noisy features to each user, thereby significantly degrading classification performance.

\section{A TOIB Approach for Task-oriented Distributed Communication} \label{sec: TOIB method}
In this section, we introduce a novel TOIB approach to overcome the limitations of the conventional IB framework \eqref{eq: conventional IB} in multi-user distributed scenarios. 
We establish the proposed system framework and formulates its information-theoretic objective tailored for distributed semantic communication. 

\subsection{Proposed TOIB Scheme}
We introduce a pairwise conditioning variable $W_{ij}$ to capture task-related dependence between user $i$ and user $j$. 
In the classification experiments, we let $W_{ij}$ take discrete values in $\{1,\ldots,K\}$, corresponding to the class index, and we construct matched/mismatched cross-user pairs within each class to estimate $I(Z_i;Z_j\mid W_{ij})$.
The variable subsequently induces the conditional joint distribution of latent semantic representations as:
\begin{align}
    p(\boldsymbol{z}_i,& \boldsymbol{z}_j | \boldsymbol{w}_{ij}) =
    \notag \\ 
    &\iint p(\boldsymbol{x}_i, \boldsymbol{x}_j | \boldsymbol{w}_{ij}) p_{\boldsymbol{\phi}_i}(\boldsymbol{z}_i | \boldsymbol{x}_i) p_{\boldsymbol{\phi}_j}(\boldsymbol{z}_j | \boldsymbol{x}_j) d\boldsymbol{x}_id\boldsymbol{x}_j.
\end{align}
The information theoretic objective of proposed TOIB is expressed as:
\begin{align}
\label{eq: TOIB}
  &\mathcal{L}_{\text{TOIB}} \notag \\
  &= \sum^{N}_{i=1}[\underbrace{-I(U_i; Y_i)}_{\text{\textbf{sufficiency}}} + \underbrace{\beta I(X_i; Z_i)}_{\text{ \textbf{compression}}} +\alpha \sum^N_{\substack{j=1 \\ j \ne i}} \underbrace{I(Z_i; Z_j\mid W_{ij})}_{\text{\textbf{orthogonality}}} ], 
\end{align}
where the efficiency term $I(U_i; Y_i)$ encourages the received signal $Y_i$ to preserve as much task-relevant information $U_i$ as possible, the compression term $I(X_i;Z_i)$ penalizes the mutual information between the raw source $X_i$ and the semantic representation $Z_i$ to compress the task-irrelevant details of the source while retaining only the essential task-relevant semantics, and the orthogonality term $I(Z_i; Z_j\mid W_{ij})$ enforces orthogonality across the latent semantic representations of different users to minimise the conditional mutual information between the latent semantic representations of user $i$ and user $j$ given the shared task $W_{ij}$. The parameters $\alpha, \beta \geq 0$ are the weighting factors to control the degree of semantic compression and semantic orthogonality, respectively. The optimisation problem \eqref{eq: TOIB} can be equivalently reformulated as follows:
\begin{align}
    &\mathcal{L}_{\text{TOIB}} = \sum^N_{i=1} \left[ \mathbb{E}_{p_{\boldsymbol{\phi}_i}(\boldsymbol{u}_i, \boldsymbol{y}_i)}\left[-\log p_{\boldsymbol{\phi}_i}(\boldsymbol{u}_i \mid \boldsymbol{y}_i)\right] -H(U_i) \notag \right.  \\
    & \left. + \beta \mathbb{E}_{p(\boldsymbol{x}_i)}\left[D_{\text{KL}}(p_{\boldsymbol{\phi}_i}(\boldsymbol{z}_i \mid \boldsymbol{x}_i) \parallel p_{\boldsymbol{\phi}_i}(\boldsymbol{z}_i))\right] \right] \notag \\
    &  + \alpha \sum^N_{i=1} \sum^N_{\substack{j=1 \\ j \ne i}} \mathbb{E}_{p(\boldsymbol{w}_{ij})} \mathbb{E}_{p(\boldsymbol{z}_i, \boldsymbol{z}_j | \boldsymbol{w}_{ij})}\left[ \log \frac{ p(\boldsymbol{z}_j | \boldsymbol{z}_i, \boldsymbol{w}_{ij})}{p(\boldsymbol{z}_j | \boldsymbol{w}_{ij})} \right], \label{eq: loss TOIB}
\end{align}
where $D_{\text{KL}}(p_{\boldsymbol{\phi}_i}(\boldsymbol{z}_i \mid \boldsymbol{x}_i) \parallel p_{\boldsymbol{\phi}_i}(\boldsymbol{z}_i))$ is the Kullback–Leibler (KL) divergence between $p_{\boldsymbol{\phi}_i}(\boldsymbol{z}_i \mid \boldsymbol{x}_i)$ and $p_{\boldsymbol{\phi}_i}(\boldsymbol{z}_i)$, and the term $H(U_i)$ in \eqref{eq: loss TOIB} is constant for a given task and can be ignored in the optimisation process. 
However, optimising \eqref{eq: loss TOIB} faces the challenges of high-dimensional integrals in the latent prior $p_{\boldsymbol{\phi}_i}(\boldsymbol{z}_i) = \int p(\boldsymbol{x}_i)p_{\boldsymbol{\phi}_i}(\boldsymbol{z}_i | \boldsymbol{x}_i)$, the label posterior distribution $p_{\boldsymbol{\phi}_i}(\boldsymbol{u}_i | \boldsymbol{y}_i)$ and $p(\boldsymbol{z}_j | \boldsymbol{z}_i)$, which are computationally prohibitive. To address this challenge, we use the variational approximation method to derive a variational surrogate upper bound $\mathcal{L}_{\text{vTOIB}}$ in \textbf{Lemma \ref{theorem 1}}.

\begin{lemma} \label{theorem 1}
    The variational surrogate upper bound of \eqref{eq: loss TOIB} is given by:
    \begin{align}
        \mathcal{L}_{\text{TOIB}} &\leq \sum^{N}_{i=1} \left[ \mathbb{E}_{p_{\boldsymbol{\phi}_i}(\boldsymbol{u}_i, \boldsymbol{y}_i)}[-\log q_{\boldsymbol{\theta}_i}(\boldsymbol{u}_i \mid \boldsymbol{y}_i)] \right. \notag \\
        &+ \left. \beta \mathbb{E}_{p_{\boldsymbol{\phi}_i}(\boldsymbol{x}_i)}\left[D_{\text{KL}}(p_{\boldsymbol{\phi}_i}(\boldsymbol{z}_i \mid \boldsymbol{x}_i) \parallel r_i(\boldsymbol{z}_i))\right] \right] \notag \\
        &+ \alpha \sum^N_{i=1} \sum^N_{\substack{j=1 \\ j \ne i}} \mathbb{E}_{p(\boldsymbol{w}_{ij})} \left[\mathbb{E}_{p(\boldsymbol{z}_i,\boldsymbol{z}_j|\boldsymbol{w}_{ij})}\log p(\boldsymbol{z}_j|\boldsymbol{z}_i, \boldsymbol{w}_{ij}) \notag \right. \\
        & \left. - \mathbb{E}_{p(\boldsymbol{z}_i|\boldsymbol{w}_{ij})p(\boldsymbol{z}_j|\boldsymbol{w}_{ij})}\log p(\boldsymbol{z}_j|\boldsymbol{z}_i, \boldsymbol{w}_{ij})\right], \label{eq: vTOIB theorem}
    \end{align}
    and \eqref{eq: loss TOIB} can be accurately approximated by $\mathcal{L}_{\text{vTOIB}}$ in \eqref{eq: surrogate} by replacing $p(\boldsymbol{z}_i,\boldsymbol{z}_j | \boldsymbol{w}_{ij})$ in \eqref{eq: vTOIB theorem} with its variational approximation $q_{\boldsymbol{\psi}_{ij}}(\boldsymbol{z}_j|\boldsymbol{z}_i, \boldsymbol{w}_{ij})$,
    where $q_{\boldsymbol{\theta}_i}(\boldsymbol{u}_i | \boldsymbol{y}_i)$, $r_i(\boldsymbol{z}_i)$, $q_{\boldsymbol{\psi}_{ij}}(\boldsymbol{z}_j|\boldsymbol{z}_i, \boldsymbol{w}_{ij})$ are the variational approximation of the distribution $p_{\boldsymbol{\phi}_i}(\boldsymbol{u}_i | \boldsymbol{y}_i)$, $p_{\boldsymbol{\phi}_i}(\boldsymbol{z}_i)$, $p(\boldsymbol{z}_j|\boldsymbol{z}_i, \boldsymbol{w}_{ij})$, respectively, i.e. $p(\boldsymbol{u}_i | \boldsymbol{y}_i) \approx q_{\boldsymbol{\theta}_i}(\boldsymbol{u}_i | \boldsymbol{y}_i)$, $r_i(\boldsymbol{z}_i) \approx p_{\boldsymbol{\phi}_i}(\boldsymbol{z}_i)$, $q_{\boldsymbol{\psi}_{ij}}(\boldsymbol{z}_j | \boldsymbol{z}_i, \boldsymbol{w}_{ij}) \approx p(\boldsymbol{z}_j | \boldsymbol{z}_i, \boldsymbol{w}_{ij})$. Here, $\boldsymbol{\phi}_i$ and $\boldsymbol{\theta}_i$ are the parameters of the user $i$'s semantic encoder and decoder, respectively, and $\boldsymbol{\psi}_{ij}$ is the parameter of a Contrastive Log-ratio Upper Bound (CLUB)-based lightweight neural network to variationally approximate the distribution $p(\boldsymbol{z}_j | \boldsymbol{z}_i, \boldsymbol{w}_{ij})$.
\end{lemma}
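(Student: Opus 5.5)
The bound \eqref{eq: vTOIB theorem} is obtained term by term from \eqref{eq: loss TOIB}: each of the three groups of terms (sufficiency, compression, orthogonality) is bounded or rewritten separately, and the results are summed. The weights $\alpha,\beta\ge 0$ preserve the direction of every inequality, and $-H(U_i)$ is a constant, which we drop since it does not affect the optimisation.

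\textbf{Sufficiency term.} We use nonnegativity of the KL divergence between the true posterior and its variational approximation. Since $D_{\text{KL}}(p_{\boldsymbol{\phi}_i}(\boldsymbol{u}_i|\boldsymbol{y}_i)\,\|\,q_{\boldsymbol{\theta}_i}(\boldsymbol{u}_i|\boldsymbol{y}_i))\ge 0$ for every $\boldsymbol{y}_i$, taking the expectation over $p_{\boldsymbol{\phi}_i}(\boldsymbol{y}_i)$ gives
\[
\mathbb{E}[-\log p_{\boldsymbol{\phi}_i}(\boldsymbol{u}_i|\boldsymbol{y}_i)]\le \mathbb{E}[-\log q_{\boldsymbol{\theta}_i}(\boldsymbol{u}_i|\boldsymbol{y}_i)].
\]

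\textbf{Compression term.} We use the standard marginal-replacement identity
\[
\mathbb{E}_{p(\boldsymbol{x}_i)}D_{\text{KL}}(p_{\boldsymbol{\phi}_i}(\boldsymbol{z}_i|\boldsymbol{x}_i)\|r_i(\boldsymbol{z}_i)) = I(X_i;Z_i)+D_{\text{KL}}(p_{\boldsymbol{\phi}_i}(\boldsymbol{z}_i)\|r_i(\boldsymbol{z}_i)),
\]
which follows by adding and subtracting $\log p_{\boldsymbol{\phi}_i}(\boldsymbol{z}_i)$ inside the expectation. Nonnegativity of the last KL term gives the upper bound.

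\textbf{Orthogonality term.} For each fixed $\boldsymbol{w}_{ij}$ we invoke the CLUB inequality. Write the conditional MI as $\mathbb{E}_{p(\boldsymbol{z}_i,\boldsymbol{z}_j|\boldsymbol{w})}\log p(\boldsymbol{z}_j|\boldsymbol{z}_i,\boldsymbol{w}) - \mathbb{E}_{p(\boldsymbol{z}_j|\boldsymbol{w})}\log p(\boldsymbol{z}_j|\boldsymbol{w})$. The gap between this expression and the right-hand side of \eqref{eq: vTOIB theorem} is
\[
\mathbb{E}_{p(\boldsymbol{z}_i|\boldsymbol{w})p(\boldsymbol{z}_j|\boldsymbol{w})}\log p(\boldsymbol{z}_j|\boldsymbol{z}_i,\boldsymbol{w}) - \mathbb{E}_{p(\boldsymbol{z}_j|\boldsymbol{w})}\log p(\boldsymbol{z}_j|\boldsymbol{w}).
\]
Since $p(\boldsymbol{z}_j|\boldsymbol{w})=\mathbb{E}_{p(\boldsymbol{z}_i|\boldsymbol{w})}p(\boldsymbol{z}_j|\boldsymbol{z}_i,\boldsymbol{w})$, Jensen's inequality (concavity of $\log$) gives $\log p(\boldsymbol{z}_j|\boldsymbol{w})\ge \mathbb{E}_{p(\boldsymbol{z}_i|\boldsymbol{w})}\log p(\boldsymbol{z}_j|\boldsymbol{z}_i,\boldsymbol{w})$. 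Hence the gap is nonpositive, so $I(Z_i;Z_j|W_{ij})$ is at most the CLUB expression. Averaging over $p(\boldsymbol{w}_{ij})$ and summing over $i\ne j$ completes the inequality \eqref{eq: vTOIB theorem}.

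\textbf{Approximation claim and main obstacle.} Replacing the intractable $p(\boldsymbol{z}_j|\boldsymbol{z}_i,\boldsymbol{w}_{ij})$ by $q_{\boldsymbol{\psi}_{ij}}$ yields $\mathcal{L}_{\text{vTOIB}}$. This step is the real obstacle, because the replacement is no longer an unconditional upper bound. We would follow the vCLUB argument. Let $\tilde I$ denote the CLUB expression with $q_{\boldsymbol{\psi}_{ij}}$ in place of the true conditional. If
\[
D_{\text{KL}}\big(p(\boldsymbol{z}_i,\boldsymbol{z}_j|\boldsymbol{w})\,\|\,q_{\boldsymbol{\psi}_{ij}}(\boldsymbol{z}_j|\boldsymbol{z}_i,\boldsymbol{w})p(\boldsymbol{z}_i|\boldsymbol{w})\big)\le D_{\text{KL}}\big(p(\boldsymbol{z}_i|\boldsymbol{w})p(\boldsymbol{z}_j|\boldsymbol{w})\,\|\,q_{\boldsymbol{\psi}_{ij}}(\boldsymbol{z}_j|\boldsymbol{z}_i,\boldsymbol{w})p(\boldsymbol{z}_i|\boldsymbol{w})\big),
\]
then $\tilde I\ge I(Z_i;Z_j|W_{ij})$. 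Moreover, the error $|\tilde I - I|$ is controlled by these KL divergences, and they vanish as $q_{\boldsymbol{\psi}_{ij}}\to p$. The condition holds once $\boldsymbol{\psi}_{ij}$ is trained by maximum likelihood on the matched pairs, which motivates alternating between tightening the estimator and updating the main networks. We would state "accurately approximated" in exactly this sense: an upper bound under the KL condition, with error vanishing as $q_{\boldsymbol{\psi}_{ij}}\to p$.
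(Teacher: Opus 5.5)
Your proof of the displayed inequality is correct and follows the paper's route. You use nonnegativity of $D_{\text{KL}}(p_{\boldsymbol{\phi}_i}(\boldsymbol{u}_i|\boldsymbol{y}_i)\|q_{\boldsymbol{\theta}_i}(\boldsymbol{u}_i|\boldsymbol{y}_i))$ and of $D_{\text{KL}}(p_{\boldsymbol{\phi}_i}(\boldsymbol{z}_i)\|r_i(\boldsymbol{z}_i))$, plus the CLUB inequality. The paper only cites CLUB from Cheng et al.; you re-derive it via Jensen, which makes the step self-contained. One small repair: the lemma's right-hand side omits $-H(U_i)$. Dropping it from one side of an inequality requires $H(U_i)\ge 0$, which holds because $U_i$ is a discrete label; being constant is not enough.

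The genuine problem is your quantitative reading of ``accurately approximated''. Write $q_{ij}=q_{\boldsymbol{\psi}_{ij}}(\boldsymbol{z}_j|\boldsymbol{z}_i,\boldsymbol{w})p(\boldsymbol{z}_i|\boldsymbol{w})$, $D_1=D_{\text{KL}}(p(\boldsymbol{z}_i,\boldsymbol{z}_j|\boldsymbol{w})\|q_{ij})$ and $D_2=D_{\text{KL}}(p(\boldsymbol{z}_i|\boldsymbol{w})p(\boldsymbol{z}_j|\boldsymbol{w})\|q_{ij})$. Then exactly $\tilde{I}-I=D_2-D_1$. As $q_{\boldsymbol{\psi}_{ij}}\to p(\boldsymbol{z}_j|\boldsymbol{z}_i,\boldsymbol{w})$, only $D_1$ vanishes. $D_2$ tends to $D_{\text{KL}}(p(\boldsymbol{z}_i|\boldsymbol{w})p(\boldsymbol{z}_j|\boldsymbol{w})\|p(\boldsymbol{z}_i,\boldsymbol{z}_j|\boldsymbol{w}))$, which is precisely the slack of the exact CLUB bound and is zero only when $Z_i\perp Z_j\mid W_{ij}$.

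So $\tilde{I}\to I_{\text{CLUB}}$, not $I$, and your claim that $|\tilde{I}-I|$ vanishes is false. What shrinks is the error relative to the right-hand side of the lemma's inequality, which is all the paper's step (b) asserts with $\simeq$. Even that needs $q_{\boldsymbol{\psi}_{ij}}$ to be accurate on mismatched, product-of-marginals pairs, and $D_1$ (an expectation under the joint) does not control this. The rigorous statements available are:
\begin{itemize}
\item $\tilde{I}\ge I-D_1$ always;
\item $\tilde{I}\ge I$ when $D_1\le D_2$.
\end{itemize}
Likewise, maximum-likelihood training of $\boldsymbol{\psi}_{ij}$ minimises $D_1$ only. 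It does not guarantee $D_1\le D_2$ for a restricted model family or finite data; that holds only in the idealised limit $D_1=0$. You should state the approximation part as the conditional, heuristic claim it is, as the paper does, rather than as a vanishing-error guarantee.
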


\proof Please refer to the Appendix \ref{appendix A}.

\textbf{Lemma \ref{theorem 1}} derives a tractable variational upper bound of \eqref{eq: TOIB} by replacing its three intractable distributions with learnable surrogates, enabling end-to-end gradient-based optimisation.

\begin{table*}[!t]
\centering
\begin{minipage}{1\textwidth}
\begin{align}
    &\tilde{\mathcal{L}}_{\text{TOIB}}  \left(\{\boldsymbol{\phi}_i,\boldsymbol{\theta}_i, \boldsymbol{\psi}_{ij}\}^N_{i,j=1}\right) = \sum^N_{i=1} \frac{1}{V}\sum^{V}_{v=1}\frac{1}{L}\sum^L_{l=1}\left[ -\log q_{\boldsymbol{\theta}_i}\left(\boldsymbol{u}^{(v)}_i | \boldsymbol{y}^{(v,l)}_i\right) \right] + \beta \sum^N_{i=1} \frac{1}{V}\sum^{V}_{v=1} D_{\text{KL}}(p_{\boldsymbol{\phi}_i}(\boldsymbol{z}_i \mid \boldsymbol{x}^{(v)}_i) \parallel r_i(\boldsymbol{z}_i)) \notag \\
    & + \alpha \sum^{N}_{i=1}\sum^{N}_{\substack{j=1 \\ j \ne i}} \sum_{w\in\mathcal{W}_{\mathcal{B}}} \frac{|\mathcal{V}_w|}{V} \left( \underbrace{\overbrace{\frac{1}{|\mathcal{V}_w|} \sum_{v \in \mathcal{V}_w}  \log{{q}_{\boldsymbol{\psi}_{ij}}}\left( \boldsymbol{z}^{(v)}_{j} | \boldsymbol{z}^{(v)}_{i},w \right)}^{\text{matched term}} - \overbrace{\frac{1}{|\mathcal{V}_w|(|\mathcal{V}_w|-1)} \sum_{\substack{v' \ne v \\v,v' \in \mathcal{V}_w }} \log{{q}_{\boldsymbol{\psi}_{ij}}}\left( \boldsymbol{z}^{(v')}_{j} | \boldsymbol{z}^{(v)}_{i} ,w \right)}^{\text{mismatched term}}}_{\hat{I}_{\text{vCLUB}}(\boldsymbol{\psi}_{ij})}  \right). \label{eq: monte carlo loss}
\end{align}
\medskip
\hrule
\end{minipage}
\end{table*}

\subsection{The Optimisation Algorithm}
To obtain a tractable and differentiable approximation of the variational training objective in \eqref{eq: surrogate}, we employ Monte Carlo (MC) estimation of all expectations and optimise the parameters by stochastic gradient descent (SGD) with back‑propagation.
For each user $i$, we draw a mini-batch $\mathcal{B}=\left\{\left( 
\boldsymbol{x}^{(v)}_i, \boldsymbol{u}^{(v)}_i, \boldsymbol{w}^{(v)}_{ij} \right)\right\}^V_{v=1}$ with the mini-batch size $V$, sample encodings $\boldsymbol{z}^{(v)}_i \sim p_{\boldsymbol{\phi}_i}(\boldsymbol{z}_i \mid \boldsymbol{x}^{(v)}_i)$, and transmit the superimposed features through the downlink $L$ times to obtain the received signals $\boldsymbol{y}^{(v,l)}_i$. 
Since the latent factor $W_{ij}$ is assumed to be discrete in classification scenarios, we explicitly define its finite support set as $\mathcal{W} = \{1, \cdots, K\}$, where $K$ represents the total number of distinct task classes. Then, we introduce the batch-specific subset $\mathcal{W}_{\mathcal{B}} \subseteq \mathcal{W}$ containing only those categories present in the current mini-batch $\mathcal{B}$. For each category $w\in\mathcal{W}_{\mathcal{B}}$, we further define the sample set $\mathcal{V}_w = \{ v\in\{1,\cdots,V\}\mid \boldsymbol{w}^{(v)}_{ij} = w \}$, which represents the set of indices corresponding to samples within the mini-batch that belong to category $w$. Consequently, $|\mathcal{V}_w|$ denotes the number of samples associated with the same task category $w$ within in $\mathcal{B}$.
This construction matches the estimator structure shown in \eqref{eq: monte carlo loss}, where within-class matched and mismatched pairs are built by leave-one-out (LOO) inside each sample set $\mathcal{V}_w$. To handle cases where $|\mathcal{V}_w|<2$, we skip the mismatched term computation for numerical stability in practice.
Concretely, given a mini-batch of size $V$, denoted by $\mathcal{B}=\left\{\left( \boldsymbol{x}^{(v)}_i, \boldsymbol{u}^{(v)}_i \right)\right\}^V_{v=1}$ and assuming $L$ independent samples of the physical channel per data point, we define the Monte Carlo estimator of the loss function in \eqref{eq: monte carlo loss}, where $\boldsymbol{y}^{(v,l)}_i$ denotes the $l$-th sampled received signal corresponding to the $v$-th source information $\boldsymbol{x}^{(v)}_i$ for user $i$ after transmission through the channel. The $i$-th encoder output $\boldsymbol{z}^{(v)}_i \sim p_{\boldsymbol{\phi}_i}(\boldsymbol{z}_i \mid \boldsymbol{x}^{(v)}_i)$ is used both for downstream semantic decoding and for estimating cross-user semantic dependence via the CLUB objective $\hat{I}_{\text{CLUB}}(\boldsymbol{\psi}_{ij})$ in \eqref{eq: monte carlo loss}. 
Therefore, a variance-reduced MC estimator of the TOIB loss can be calculated by \eqref{eq: monte carlo loss}.

\subsection{Implementation Overhead and Computation Complexity}

For each mini-batch, the additional overhead of Algorithm~\ref{Algorithm 1} mainly stems from the pairwise orthogonality regularisation. In Phase-A, $N(N{-}1)$ lightweight CLUB estimators are updated for $M$ gradient-ascent steps, leading to $O(MN^2)$ auxiliary pair updates with respect to the number of users. In Phase-B, the same pairwise regulariser is further involved in the encoder--decoder back-propagation. Hence, the quadratic scaling over user pairs is the principal bottleneck for larger $N$. More precisely, under the estimator in~\eqref{eq: monte carlo loss}, the exact arithmetic cost also depends on the batch-level construction of matched and mismatched pairs. 



\begin{algorithm}[t]
\caption{The TOIB Training Procedure}
\textbf{Input:} Epochs $T$, batch size $V$, channel resamples $L$, CLUB steps $M$, negatives $K_{\text{neg}}$, weighting factors $\alpha,\beta$, initialised parameters $\{\boldsymbol{\phi}_i, \boldsymbol{\theta}_i, \boldsymbol{\psi}_{ij}\}_{i,j=1}^N$.
\begin{algorithmic}[1]
\For{$t=1$ \textbf{to} $T$}
    \For{$i=1$ \textbf{to} $N$}
      \State Sample mini-batch $\{(\boldsymbol{x}_i^{(v)},\boldsymbol{u}_i^{(v)})\}_{v=1}^V$;
      \State Generate encoder outputs $\{\boldsymbol{z}_i^{(v)}\}_{v=1}^{V}$ by \eqref{eq: encoder};
      \State Calculate superposed signal $\boldsymbol{s}^{(v)}$ by \eqref{eq: superimposed signal} and transmit over the AWGN channel;
      \State Resample channel $L$ times to generate received signals $\{\boldsymbol{y}_i^{(v,l)}\}_{v=1,l=1}^{V,L}$ by \eqref{eq: received signal};
      \State \textbf{Phase‑A} (update $\boldsymbol{\psi}_{ij}$): 
      \For{$m=1$ \textbf{to} $M$} 
        \For{$j=1$ \textbf{to} $N, j \ne i$} 
          \State Set $\mathcal V_w=\{v:\,w_{ij}^{(v)}=w\}$;
          \State Update CLUB-based estimator $\boldsymbol{\psi}_{ij}$ by gradient ascent on $\hat{I}_{\text{vCLUB}}(\boldsymbol{\psi}_{ij})$;
        \EndFor
      \EndFor
      \State \textbf{Phase‑B} (Back-propagation): Compute \eqref{eq: monte carlo loss} and update $\{\boldsymbol{\phi}_i,\boldsymbol{\theta}_i\}$ by back‑propagation method;
  \EndFor
\EndFor
\end{algorithmic}
\textbf{Output:} Optimised parameters $\{\boldsymbol{\phi}_i, \boldsymbol{\theta}_i, \boldsymbol{\psi}_{ij}\}_{i,j=1}^N$.
\label{Algorithm 1}
\end{algorithm}

\section{Simulation Results}
\label{sec: results}

\subsection{Experiments Setup}

We run the simulations on the CIFAR-10 dataset, which contains $50,000$ training data and $10,000$ testing data, to perform classification tasks. For fairness, we use ResNet-50 as the backbone network in all experiments. We set $\beta=0.01$ in all experiments with learning rate $10^{-4}$, and $\alpha=0.01$ in our proposed TOIB model. 
For fairness, our TOIB model and these three baselines are all trained with $\text{SNR}=5\;\text{dB}$ for $100$ epochs in a 2-user scenario, where U1 and U2 represent for user 1 and user 2, respectively. In addition, the equal power allocation strategy is performed.
We compare the proposed method with the following three schemes below: 
1) \textbf{Deep JSCC \cite{bourtsoulatze2019deep}}: An end-to-end DL-based JSCC scheme with the cross-entropy loss function. Since the original Deep JSCC is a single-user model, we superimpose the output of each users' encoder to generate the broadcast signal, which follows the same procedure in \eqref{eq: superimposed signal} and \eqref{eq: received signal}; 2) \textbf{Upper Bound}: We consider the original single-user model Deep JSCC as our upper bound to compare the difference between single-user model and multi-user model; 3) \textbf{Deep VIB}: If we set $\alpha=0$ in our proposed TOIB, we have a multi-user variational information bottleneck framework in \eqref{eq: conventional IB} with the cross-entropy and KL divergence loss function. Here, we assume the latent prior is a standard multivariable Gaussian distribution, i.e., $\boldsymbol{z}_i \sim \mathcal{N}(0, \boldsymbol{I})$.


\subsection{Experiments Results}
In Fig. \ref{fig: t-SNE}, we present the t-distribution stochastic neighbour embedding (t-SNE) of the each encoder's output \eqref{eq: encoder} in our proposed TOIB model and project it into a two-dimension latent space. The latent features of different users are expressed with different markers, and there are 10 classes in total. As shown in this figure, we can observe that the latent features of different classes are well-separated, which proves the classification ability of our TOIB model. We can also observe that both users' latent features have 10 classes, and latent features from different users are also well-separated, which proves the orthogonality ability of our TOIB model.

\begin{figure}[t]
\centerline{\includegraphics[width=0.45\textwidth]{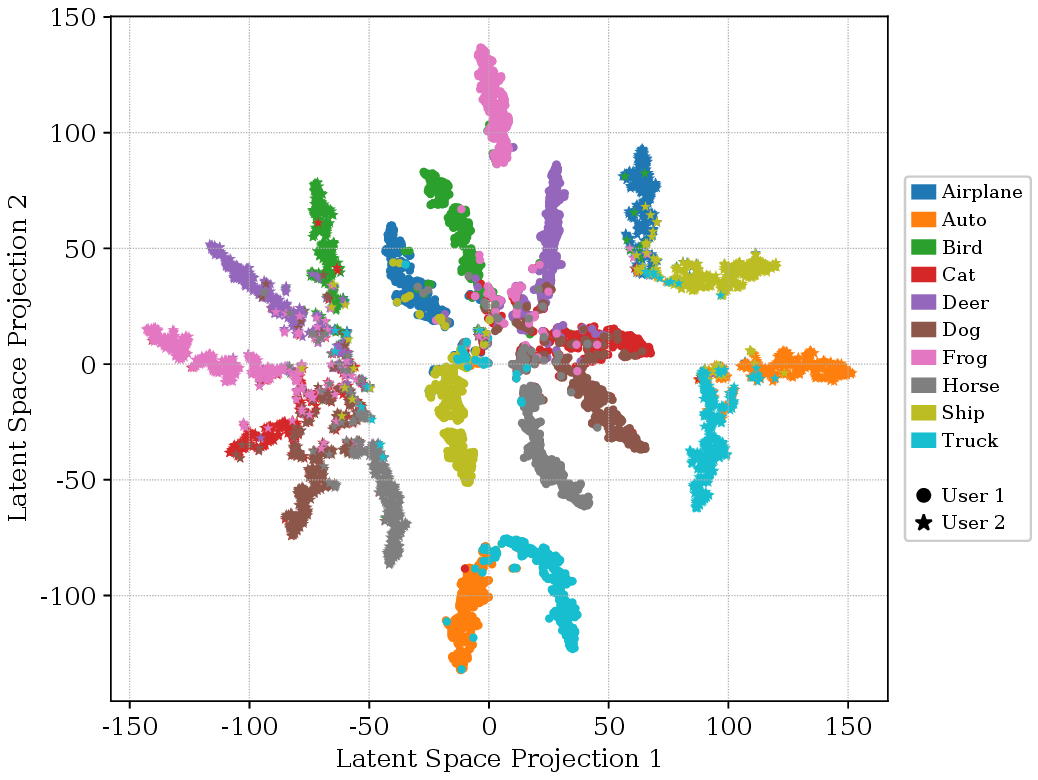}}
\caption{The t-SNE visualisation of each user's encoder output.}
    \label{fig: t-SNE}
\end{figure}

In Fig. \ref{fig: Acc AWGN}, we evaluate the classification accuracy over the AWGN channel. As shown in this figure, our proposed TOIB model performs better than other the Deep VIB model, where we set $\alpha=0$. This shows the effectiveness of our proposed orthogonality term in \eqref{eq: TOIB} compared with the traditional IB framework in \eqref{eq: conventional IB}. We also observe that our TOIB model performs better than the Deep JSCC model. This is due to the fact that the Deep JSCC model does not explicitly consider the orthogonality between semantic representations between different users, which may leads to semantic interference and reduce the classification accuracy. We can also observe that the curves corresponding to U1 and U2 across all models are closely aligned, which demonstrates the fairness of the proposed models.
In Fig. \ref{fig: Acc rayleigh}, we evaluate the classification accuracy over the Rayleigh fading channel. Our proposed TOIB model outperforms Deep VIB ($\alpha=0$) by $ 2\%-5\%$, demonstrating the effectiveness of the orthogonality term in \eqref{eq: TOIB}. TOIB also significantly outperforms Deep JSCC, which saturates at approximately $84\%$ due to the lack of explicit orthogonality constraints between users' semantic representations. The gap between TOIB and the single-user upper bound remains less than $1\%$, indicating that inter-user interference is effectively mitigated. 
The closely aligned U1 and U2 curves across all models further demonstrate the fairness of the proposed system.
\begin{figure}[t]
    \centerline{\includegraphics[width=0.35\textwidth]{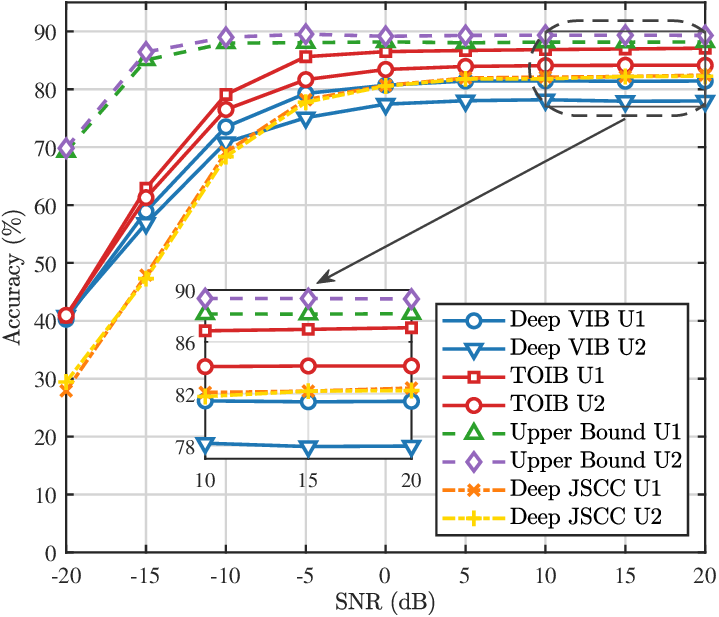}}
    \caption{The classification accuracy ($\%$) versus SNRs over AWGN channel.}
    \label{fig: Acc AWGN}
\end{figure}

\begin{figure}[t]
    \centerline{\includegraphics[width=0.35\textwidth]{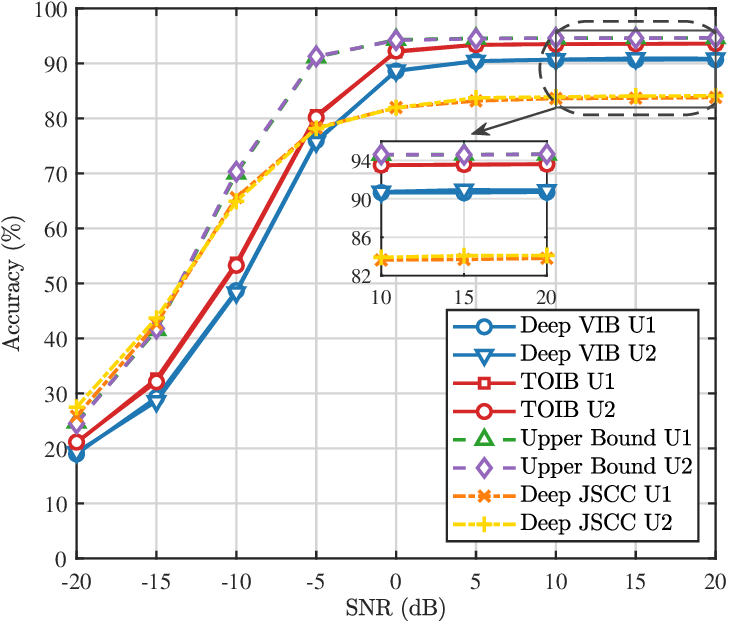}}
    \caption{The classification accuracy ($\%$) versus SNRs over Rayleigh channel.}
    \label{fig: Acc rayleigh}
\end{figure}

In Table \ref{tab: cross-decoding}, we evaluate the cross-decoding accuracy at $\text{SNR}=0~\text{dB}$ comparing across Deep JSCC, Deep VIB and our TOIB model. We use $\boldsymbol{D}_{\boldsymbol{\theta}_i}(\boldsymbol{y}_i)\!\rightarrow\! \boldsymbol{u}_i$ to represent for the normal decoding process, where the decoder $i$ tries to decode the information $\boldsymbol{u}_i$ from the received signal $\boldsymbol{y}_i$. On the contrary, we use $\boldsymbol{D}_{\boldsymbol{\theta}_i}(\boldsymbol{y}_i)\!\rightarrow\! \boldsymbol{u}_j, j \ne i$ to represent for the cross-decoding process, where the decoder $i$ tries to decode the information $\boldsymbol{u}_j$ from the received signal $\boldsymbol{y}_i$. As shown in the table, we observe that our proposed TOIB model performs better than the other two baselines in the normal decoding process, which is also demonstrated in Fig. \ref{fig: Acc AWGN}. We can also observe that the accuracy of cross decoding process is merely 10\%. Since we are evaluating a 10-class classification task, the accuracy level of cross-decoding is nearly equivalent to that of random guessing. 

\section{Conclusion}
\label{sec: conclusion}
This paper has proposed a novel TOIB framework for task-oriented distributed semantic communication systems. By introducing conditional task-related latent factors and incorporating a mutual information regulariser, TOIB explicitly enforces semantic orthogonality across users while maintaining semantic sufficiency and compression. A variational surrogate bound was derived to facilitate tractable optimization, employing CLUB-based estimators and Monte Carlo sampling.
Extensive experiments on classification tasks demonstrate that TOIB significantly improves accuracy over Deep JSCC and VIB baselines. 
These results underline TOIB as a promising direction for scalable and interference-resilient distributed semantic communication.

\vspace{-3.4mm}

\appendices
\section{Proof of Lemma \ref{theorem 1}} \label{appendix A}
We adopt the variational information bottleneck (VIB) method in \eqref{eq: loss TOIB} to approximate the intractable distributions $p(\boldsymbol{u}_i | \boldsymbol{y}_i)$, $p_{\boldsymbol{\phi}_i}(\boldsymbol{z}_i)$ and $p(\boldsymbol{z}_j|\boldsymbol{z}_i, \boldsymbol{w}_{ij})$ by $q_{\boldsymbol{\theta}_i}(\boldsymbol{u}_i | \boldsymbol{y}_i)$, $r_i(\boldsymbol{z}_i)$ and $q_{\boldsymbol{\psi}_{ij}}(\boldsymbol{z}_j|\boldsymbol{z}_i, \boldsymbol{w}_{ij})$, respectively.  
Specifically, the upper bound of $\mathcal{L}_{\text{TOIB}}$ in \eqref{eq: loss TOIB} is derived in \eqref{eq: vTOIB} at the top of this page.
where the inequality $(a)$ in \eqref{eq: vTOIB} holds since $D_{\text{KL}}( p_{\boldsymbol{\phi}_i}(\boldsymbol{u}_i | \boldsymbol{y}_i) \parallel q_{\boldsymbol{\theta}_i}(\boldsymbol{u}_i | \boldsymbol{y}_i) ) \geq 0$, $D_{\text{KL}}(p_{\boldsymbol{\phi}_i}(\boldsymbol{z}_i) \parallel r_i(\boldsymbol{z}_i)) \geq 0$ and the inequality \eqref{eq: third term CLUB} holds according to \cite{cheng2020club}:
\begin{align}
    I(Z_i, Z_j | \boldsymbol{w}_{ij}) \leq  I_{\text{CLUB}}(Z_i, Z_j | \boldsymbol{w}_{ij}). \label{eq: third term CLUB}
\end{align}
Since the distribution $p(\boldsymbol{z}_j |\boldsymbol{z}_i, \boldsymbol{w}_{ij})$ is intractable, we introduce a CLUB-based light-weight neural network with the parameter $\boldsymbol{\psi}_{ij}$ to approximate the surrogate upper bound in \eqref{eq: vTOIB}, i.e., $q_{\boldsymbol{\psi}_{ij}}(\boldsymbol{z}_j |\boldsymbol{z}_i, \boldsymbol{w}_{ij}) \approx p(\boldsymbol{z}_j |\boldsymbol{z}_i, \boldsymbol{w}_{ij})$. We claim that the inequality in \eqref{eq: surrogate} means that $I_{\text{vCLUB}}(Z_i,Z_j|\boldsymbol{w}_{ij})$ no longer guarantees an upper bound of $I(Z_i,Z_j|\boldsymbol{w}_{ij})$, but with a good variational approximation $q_{\boldsymbol{\psi}_{ij}}(\boldsymbol{z}_j |\boldsymbol{z}_i, \boldsymbol{w}_{ij})$, $I_{\text{vCLUB}}(Z_i,Z_j|\boldsymbol{w}_{ij})$ can still hold a MI upper bound or become a reliable MI estimator, which is illustrated detailed in \cite{cheng2020club}. Therefore, as long as $q_{\boldsymbol{\psi}_{ij}}(\boldsymbol{z}_j |\boldsymbol{z}_i, \boldsymbol{w}_{ij})$ can adequately approximate $p(\boldsymbol{z}_j |\boldsymbol{z}_i, \boldsymbol{w}_{ij})$, the inequality \eqref{eq: surrogate} behaves as a reliable estimator practically.

\begin{table*}[!t]
\centering
\begin{minipage}{1\textwidth}
\begin{subequations}
\begin{align}
    \mathcal{L}_{\text{TOIB}} 
    & \overset{(a)}{\leq} \sum^N_{i=1} \left[ \mathbb{E}_{p_{\boldsymbol{\phi}_i}(\boldsymbol{u}_i, \boldsymbol{y}_i)}[-\log q_{\boldsymbol{\theta}_i}(\boldsymbol{u}_i \mid \boldsymbol{y}_i)] - \underbrace{H(U_i)}_{constant} + \beta \mathbb{E}_{p_{\boldsymbol{\phi}_i}(\boldsymbol{x}_i)}\left[D_{\text{KL}}(p_{\boldsymbol{\phi}_i}(\boldsymbol{z}_i \mid \boldsymbol{x}_i) \parallel r_i(\boldsymbol{z}_i))\right] \right] \notag \\
    & + \alpha \sum^N_{i=1} \sum^N_{\substack{j=1 \\ j \ne i}} \mathbb{E}_{p(\boldsymbol{w}_{ij})} \left[\underbrace{\mathbb{E}_{p(\boldsymbol{z}_i, \boldsymbol{z}_j | \boldsymbol{w}_{ij})} [{\log p(\boldsymbol{z}_j | \boldsymbol{z}_i, \boldsymbol{w}_{ij})}]-\mathbb{E}_{p(\boldsymbol{z}_i|\boldsymbol{w}_{ij})p(\boldsymbol{z}_j|\boldsymbol{w}_{ij})}[{\log  p(\boldsymbol{z}_j | \boldsymbol{z}_i, \boldsymbol{w}_{ij})}]}_{I_{\text{CLUB}(Z_i,Z_j|\boldsymbol{w}_{ij})}} \right] \label{eq: vTOIB} \\
    & \overset{(b)}{\simeq}  \sum^N_{i=1} \left[ \mathbb{E}_{p_{\boldsymbol{\phi}_i}(\boldsymbol{u}_i, \boldsymbol{y}_i)}[-\log q_{\boldsymbol{\theta}_i}(\boldsymbol{u}_i \mid \boldsymbol{y}_i)] + \beta \mathbb{E}_{p_{\boldsymbol{\phi}_i}(\boldsymbol{x}_i)}\left[D_{\text{KL}}(p_{\boldsymbol{\phi}_i}(\boldsymbol{z}_i \mid \boldsymbol{x}_i) \parallel r_i(\boldsymbol{z}_i))\right] \right] \notag \\
    & + \alpha \sum^N_{i=1} \sum^N_{\substack{j=1 \\ j \ne i}} \mathbb{E}_{p(\boldsymbol{w}_{ij})} \left[\underbrace{\mathbb{E}_{p(\boldsymbol{z}_i, \boldsymbol{z}_j | \boldsymbol{w}_{ij})} [{\log q_{\boldsymbol{\psi}_{ij}}(\boldsymbol{z}_j|\boldsymbol{z}_i, \boldsymbol{w}_{ij})}]-\mathbb{E}_{p(\boldsymbol{z}_i|\boldsymbol{w}_{ij})p(\boldsymbol{z}_j|\boldsymbol{w}_{ij})}[{\log  q_{\boldsymbol{\psi}_{ij}}(\boldsymbol{z}_j|\boldsymbol{z}_i, \boldsymbol{w}_{ij})}]}_{I_{\text{vCLUB}(Z_i,Z_j|\boldsymbol{w}_{ij})}} \right] = \mathcal{L}_{\text{vTOIB}} \label{eq: surrogate}
\end{align}
\end{subequations}
\medskip
\hrule
\end{minipage}
\end{table*}

\begin{table}[t]
    \centering
    \caption{Cross-decoding accuracy (\%) at different SNRs}
    \label{tab: cross-decoding}
    \renewcommand{\arraystretch}{1.2}
    \resizebox{\columnwidth}{!}{%
    \begin{tabular}{|c|c|c|c|c|c|}
        \hline
        
        & $\boldsymbol{D}_{\boldsymbol{\theta}_1}(\boldsymbol{y}_1)\!\rightarrow\! \boldsymbol{u}_1$
        & $\boldsymbol{D}_{\boldsymbol{\theta}_1}(\boldsymbol{y}_1)\!\rightarrow\! \boldsymbol{u}_2$
        & $\boldsymbol{D}_{\boldsymbol{\theta}_2}(\boldsymbol{y}_2)\!\rightarrow\! \boldsymbol{u}_2$
        & $\boldsymbol{D}_{\boldsymbol{\theta}_2}(\boldsymbol{y}_2)\!\rightarrow\! \boldsymbol{u}_1$ 
        \\
        \hline
        Deep JSCC    & 82.03\% &  9.96\% & 81.83\% & 10.88\%  \\
        \hline
        Deep VIB     & 81.59\% &  9.98\% & 77.98\% & 10.08\%  \\
        \hline
        TOIB         & \textbf{87.04\%} & 10.18\% & \textbf{84.05\%} & 10.40\%  \\
        \hline
    \end{tabular}%
    }
\end{table}

\bibliographystyle{ieeetr}
\bibliography{ref}
\end{document}